\documentclass[11pt]{article}

\usepackage[letterpaper,margin=1in]{geometry}

\usepackage{amsmath,amssymb,amsthm,mathtools}
\usepackage{graphicx}
\usepackage{xcolor}
\usepackage{microtype}
\usepackage{comment}

\usepackage[colorlinks=true,linkcolor=blue,citecolor=blue,urlcolor=blue]{hyperref}
\usepackage[nameinlink,capitalise,noabbrev]{cleveref}

\usepackage{enumitem}
\usepackage{booktabs}
\usepackage{algorithm}
\usepackage{algorithmic}

\newtheorem{theorem}{Theorem}[section]
\newtheorem{lemma}[theorem]{Lemma}
\newtheorem{proposition}[theorem]{Proposition}
\newtheorem{corollary}[theorem]{Corollary}

\theoremstyle{definition}

\theoremstyle{remark}
\newtheorem{remark}[theorem]{Remark}

\newcommand{\F}{\mathbb{F}}

\newcommand{\BW}{\mathrm{BW}}
\newcommand{\IO}{\mathrm{IO}}
\newcommand{\PG}{\mathrm{PG}}

\newcommand{\avg}{\mathrm{avg}}
\newcommand{\rank}{\mathrm{rank}}
\newcommand{\nz}{\mathrm{nz}}

\newcommand{\col}{\mathrm{col}}

\begin{document}

\begin{center}
  {\LARGE\bfseries
  The Incidence-Multiplicity Bound for Linear\\[0.3em]
  Exact Repair in MDS Array Codes\par}
\end{center}

\vspace{0.6em}
\begin{center}
  {\large
    Huawei Wu\textsuperscript{1,*}
  }\\[0.55em]
  {\normalsize
    \textsuperscript{1}Shanghai Fintelli Box Technology Co., Ltd., Shanghai, 200127, China\\[0.35em]
    \textsuperscript{*}Corresponding author.
    \href{mailto:wuhuawei1996@gmail.com}{\texttt{wuhuawei1996@gmail.com}}
  }
\end{center}

\begin{abstract}
  We study linear exact repair for $(n,k,\ell)$ MDS array codes over $\mathbb{F}_q$, with redundancy $r=n-k$, in the regime where $q$, $r$, and $\ell$ are fixed and the code length $n$ varies.
  
  A recent projective counting argument gives a general lower bound on repair bandwidth and repair I/O in this setting. While this bound is attained over a broad interval of code lengths in the two-parity case, it is not attained once $r\ge 3$ and $\ell\ge 2$.
  
  In this paper, we refine the counting argument behind this bound and establish a sharper lower bound, which we call the incidence-multiplicity bound. We prove that for every $(n,k,\ell)$ MDS array code over $\mathbb{F}_q$ with $r\ge 2$, both the average and worst-case repair bandwidth, as well as the average and worst-case repair I/O, are at least
  $$\ell(n-1)-(r-1)\frac{q^\ell-1}{q-1}.$$
  This bound agrees with the earlier projective counting bound when $r=2$, and is strictly stronger for every $r\ge 3$.
  
  We also show that the incidence-multiplicity bound is sharp in a broad parameter range. Assume that $\ell\ge 2$, $r\ge 2$, $(r-1)\mid(q-1)$ and $(q-1)/(r-1)\ge 2$. Then for every integer $n$ satisfying
  $$2(r-1)\frac{q^\ell-1}{q-1}\le n\le q^\ell+1,$$
  there exists an $(n,n-r,\ell)$ MDS array code over $\F_q$ that attains the incidence-multiplicity bound simultaneously for both repair bandwidth and repair I/O. These codes arise from field reduction of a normal rational curve.
  
  Together, these results reveal incidence multiplicity as the governing geometric principle for linear exact repair in MDS array codes beyond the two-parity case.
  \end{abstract}

%
\newpage
\tableofcontents
\newpage

\section{Introduction}

Erasure coding is widely used in distributed storage systems to provide reliability against
node failures. Among erasure codes, MDS codes are especially attractive because they
achieve the optimal redundancy--reliability trade-off. In such a system, one frequently
faces the problem of repairing a failed node from the data stored in the surviving nodes.
A central measure of the efficiency of this process is the \emph{repair bandwidth}, namely
the total amount of information downloaded from the helper nodes. This naturally leads to
the question of how to design MDS codes that minimize repair bandwidth.

Compared with scalar MDS codes, \emph{MDS array codes} allow a finer-grained organization of the
data stored in each node, which can substantially reduce repair bandwidth. This has made
the \emph{sub-packetization} level, that is, the number of smaller units into which the
contents of each node are split, a central structural parameter in the study of repair
efficiency~\cite{ramkumar2022codes}.

For single-node repair, the regenerating-code framework provides an information-theoretic
lower bound, namely the \emph{cut-set bound}. At the \emph{minimum-storage point} of the resulting
trade-off, one obtains \emph{minimum-storage regenerating (MSR) codes}, which remain MDS while
achieving optimal repair bandwidth~\cite{dimakis2010network}. The obstacle is that this optimum is expensive to realize: in the high-rate regime,
exact-repair MSR codes provably require exponential, or at least very large, sub-packetization~\cite{balaji2018tight,alrabiah2019exponential,balaji2022lower}. From a practical perspective, such excessively fine partitioning is also undesirable,
since it may lead to fragmented and often non-contiguous disk accesses~\cite{gan2025revisiting,shen2025survey}. This has led to a broad line of work on \emph{MDS array codes} with small, or even
constant, sub-packetization, with the aim of determining the best achievable repair bandwidth without insisting on the MSR point~\cite{rashmi2017piggybacking,kralevska2018hashtag,rawat2017msr,kralevska2018explicit,ramkumar2022codes,ramkumar2025varepsilon}. A central open direction is to understand the trade-off between repair bandwidth, sub-packetization, and field size for MDS array codes~\cite[Open Problem~9]{ramkumar2022codes}.

Besides repair bandwidth, another important measure of repair efficiency is the
\emph{repair I/O}, namely the total amount of information accessed at the helper nodes
during repair. This quantity has attracted increasing attention in recent
years~\cite{dau2018repair,li2019costs,liu2024formula,liu2025calculating}. Accordingly, in
designing codes for efficient repair, one would ideally like to simultaneously minimize both repair
bandwidth and repair I/O.

In this paper, we consider \emph{linear exact repair} for $(n,k,\ell)$ MDS array codes over the finite field $\F_q$ of size $q$, with redundancy $r=n-k$, in the regime where the field size $q$, the redundancy $r$, and the sub-packetization level $\ell$ are fixed while the code length $n$ varies. Our focus is on lower bounds for repair bandwidth and repair I/O, and on
the structural mechanisms underlying those bounds.

A recent paper \cite{liu2026linear} introduced an intrinsic subspace formulation of linear
exact repair and, from that viewpoint, proved a general lower bound by a projective counting
argument. For an $(n,k,\ell)$ MDS array code $\mathcal{C}$ over $\F_q$ with redundancy $r=n-k\ge 2$,
let $\beta_{\avg}(\mathcal{C})$ and $\beta_{\max}(\mathcal{C})$ denote the average and worst-case repair
bandwidth, and let $\gamma_{\avg}(\mathcal{C})$ and $\gamma_{\max}(\mathcal{C})$ denote the corresponding
repair I/O parameters. That paper proves that
\begin{equation}\label{eq:projective-counting-bound}
\beta_{\avg}(\mathcal{C}),\ \beta_{\max}(\mathcal{C}),\ \gamma_{\avg}(\mathcal{C}),\ \gamma_{\max}(\mathcal{C})
\;\ge\;
\ell(n-1)-\frac{q^{(r-1)\ell}-1}{q-1}.
\end{equation}
It also shows a sharp contrast between the regimes $r=2$ and $r\ge 3$: in the two-parity
case, constructions from finite geometry attain the bound over a broad interval of code
lengths simultaneously for all four quantities, whereas once $r\ge 3$ and $\ell\ge 2$ it
is never attained. Thus the projective counting bound provides a useful general
obstruction, but beyond the two-parity case it does not capture the correct lower-bound scale for repair bandwidth and repair I/O.

The starting point of the present paper is that the projective counting argument of
\cite{liu2026linear} uses only a relatively coarse packing phenomenon. More precisely, it
counts projective pieces inside a repair subspace by exploiting pairwise disjointness.
What it does not fully use is the stronger $r$-wise spanning structure imposed by the MDS
condition. We show that once this higher-order structure is brought into the counting
problem, one obtains a strictly sharper lower bound. We call the resulting estimate the
\emph{incidence-multiplicity bound}.

Our first main result states that for every $(n,k,\ell)$ MDS array code $\mathcal{C}$ over $\F_q$ with redundancy $r=n-k\ge 2$, one has
\[
\beta_{\avg}(\mathcal{C}),\ \beta_{\max}(\mathcal{C}),\ \gamma_{\avg}(\mathcal{C}),\ \gamma_{\max}(\mathcal{C})
\;\ge\;
\ell(n-1)-(r-1)\frac{q^\ell-1}{q-1}.
\]
When $r=2$, this agrees with the projective counting bound of \cite{liu2026linear}; when
$r\ge 3$, it is strictly stronger, replacing the correction term
\[
\frac{q^{(r-1)\ell}-1}{q-1}
\]
by
\[
(r-1)\frac{q^\ell-1}{q-1}.
\]

Our second main result shows that this new bound is sharp on a broad explicit family.
Under the assumptions
\[
\ell\ge 2,\qquad r\ge 2,\qquad (r-1)\mid(q-1),\qquad \frac{q-1}{r-1}\ge 2,
\]
we construct, for every integer $n$ satisfying
\[
2(r-1)\frac{q^\ell-1}{q-1}\le n\le q^\ell+1,
\]
an $(n,n-r,\ell)$ MDS array code over $\F_q$ such that the incidence-multiplicity bound is
attained simultaneously by the average and worst-case repair bandwidth and by the average
and worst-case repair I/O. These constructions arise from field reduction of a normal
rational curve.

Taken together, these results show that beyond the two-parity case, incidence multiplicity
rather than projective packing is the geometric principle that governs linear exact repair
in MDS array codes.

The rest of the paper is organized as follows. In Section~2 we briefly recall the standard
matrix formulation of linear exact repair and the intrinsic subspace reformulation proposed in \cite{liu2026linear}. In Section~3 we prove the incidence-multiplicity bound. In Section~4 we construct a broad family of MDS array codes, arising from field reduction of a normal rational curve, for which the new bound is attained simultaneously by the
average and worst-case repair bandwidth and by the average and worst-case repair I/O. We conclude in Section~5 with
some further remarks and open questions.

\section{Preliminaries and Intrinsic Setup}\label{sec:prelim}

In this section we briefly recall the standard matrix formulation of linear exact repair and
the intrinsic subspace reformulation introduced in \cite{liu2026linear}. We only record the
notation and consequences needed in the sequel.

\subsection{Linear Exact Repair in Matrix Form}\label{subsec:matrix}

Throughout the paper, for a positive integer $m$, we write $[m]:=\{1,2,\dots,m\}$.

Fix integers $n,k,\ell\in\mathbb{N}^+$ with $1\le k<n$, and write
\[
r:=n-k.
\]
An $(n,k,\ell)$ \emph{MDS array code} over $\F_q$ is an $\F_q$-linear subspace
\[
\mathcal{C}\le (\F_q^\ell)^n
\]
of dimension $k\ell$. Its elements are written as
\[
C=(C_1,\dots,C_n),\qquad C_i\in\F_q^\ell,
\]
where $C_i$ is the block stored in node $i$.

A convenient description of $\mathcal{C}$ is via a block parity-check matrix
\[
H=[\,H_1\ H_2\ \cdots\ H_n\,]\in\F_q^{r\ell\times n\ell},
\qquad
H_i\in\F_q^{r\ell\times \ell},
\]
of full row rank $r\ell$, so that
\[
\mathcal{C}
=
\ker(H)
=
\left\{
(C_1,\dots,C_n)\in(\F_q^\ell)^n:\ \sum_{i=1}^n H_iC_i=0
\right\}.
\]
The code $\mathcal{C}$ is MDS if for every subset $I\subseteq[n]$ with $|I|=r$, the square
block matrix
\[
H_I:=[\,H_i\,]_{i\in I}\in\F_q^{r\ell\times r\ell}
\]
is invertible.

Suppose that node $i\in[n]$ fails. A \emph{linear repair scheme} for node $i$ is specified
by a matrix
\[
M\in\F_q^{\ell\times r\ell}.
\]
Multiplying the parity-check equations by $M$ gives
\[
\sum_{j=1}^n (MH_j)C_j=0.
\]
If $MH_i$ is invertible, then the missing block $C_i$ is determined uniquely by
\[
C_i=-(MH_i)^{-1}\sum_{j\ne i}(MH_j)C_j.
\]
Accordingly, whenever $MH_i$ is invertible, we say that $M$ \emph{repairs} node $i$.

For such a repair matrix $M$, define the repair bandwidth
\[
\mathrm{BW}_i(M):=\sum_{j\ne i}\rank(MH_j).
\]
This is the total number of $\F_q$-symbols downloaded from the helper nodes.

To measure the amount of accessed information, let $\nz(A)$ denote the number of nonzero
columns of a matrix $A$. Since $(MH_j)C_j$ depends only on those coordinates of $C_j$
corresponding to nonzero columns of $MH_j$, define the repair I/O by
\[
\mathrm{IO}_i(M):=\sum_{j\ne i}\nz(MH_j).
\]
Clearly,
\[
\mathrm{IO}_i(M)\ge \mathrm{BW}_i(M)
\]
for every repair matrix $M$.

For each node $i\in[n]$, let
\[
\mathcal{M}_i:=\{\,M\in\F_q^{\ell\times r\ell}: MH_i \text{ is invertible}\,\}
\]
be the set of repair matrices for node $i$. We then define the optimal per-node repair
bandwidth and repair I/O by
\[
\beta_i(\mathcal{C}):=\min_{M\in\mathcal{M}_i}\mathrm{BW}_i(M),
\qquad
\gamma_i(\mathcal{C}):=\min_{M\in\mathcal{M}_i}\mathrm{IO}_i(M).
\]
Aggregating over all nodes gives
\[
\beta_{\avg}(\mathcal{C}):=\frac1n\sum_{i=1}^n\beta_i(\mathcal{C}),
\qquad
\beta_{\max}(\mathcal{C}):=\max_{i\in[n]}\beta_i(\mathcal{C}),
\]
and
\[
\gamma_{\avg}(\mathcal{C}):=\frac1n\sum_{i=1}^n\gamma_i(\mathcal{C}),
\qquad
\gamma_{\max}(\mathcal{C}):=\max_{i\in[n]}\gamma_i(\mathcal{C}).
\]

\subsection{Intrinsic Subspace Reformulation}\label{subsec:intrinsic}

We now pass to the intrinsic reformulation from \cite{liu2026linear}. Set
\[
  \mathbb{V}:=\F_q^{r\ell}.
\]
Write each parity-check block as
\[
  H_i=[\,h_{i,1}\ \cdots\ h_{i,\ell}\,],
  \qquad h_{i,t}\in\mathbb{V},
\]
and define
\[
  \mathcal{H}_i:=\col(H_i)\le \mathbb{V}.
\]
Since $\mathcal{C}$ is MDS, each $\mathcal{H}_i$ has dimension $\ell$, and the family
$\mathcal{H}_1,\dots,\mathcal{H}_n$ satisfies
\[
  \sum_{j\in J}\mathcal{H}_j=\mathbb{V}
  \qquad\text{for every }J\subseteq[n]\text{ with }|J|=r.
\]

For a fixed node $i\in[n]$, define
\[
  \mathcal{W}_i
  :=
  \{\,W\le \mathbb{V}:\dim(W)=(r-1)\ell,\ W\cap \mathcal{H}_i=\{0\}\,\}.
\]
By \cite{liu2026linear}, if $M\in\mathcal{M}_i$ and $W=\ker(M)$, then
\[
  W\in\mathcal{W}_i
\]
and, for every $j\ne i$,
\[
  \rank(MH_j)=\ell-\dim(W\cap \mathcal{H}_j).
\]
Hence
\[
  \BW_i(M)=\ell(n-1)-\sum_{j\ne i}\dim(W\cap \mathcal{H}_j).
\]
This motivates
\[
  \alpha_i:=\max_{W\in\mathcal{W}_i}\sum_{j\ne i}\dim(W\cap \mathcal{H}_j),
\]
so that
\[
  \beta_i(\mathcal{C})=\ell(n-1)-\alpha_i.
\]
With
\[
  \alpha_{\avg}:=\frac1n\sum_{i=1}^n\alpha_i,
  \qquad
  \alpha_{\min}:=\min_{i\in[n]}\alpha_i,
\]
we obtain
\[
  \beta_{\avg}(\mathcal{C})=\ell(n-1)-\alpha_{\avg},
  \qquad
  \beta_{\max}(\mathcal{C})=\ell(n-1)-\alpha_{\min}.
\]

To treat repair I/O, define the projective column set
\[
  X_i:=\{\langle h_{i,1}\rangle,\dots,\langle h_{i,\ell}\rangle\}\subseteq\mathbb{P}(\mathcal{H}_i).
\]
For $W\in\mathcal{W}_i$ and $j\ne i$, let
\[
  z_j(W):=\bigl|\{\,t\in[\ell]: h_{j,t}\in W\,\}\bigr|
  =|X_j\cap\mathbb{P}(W)|.
\]
Again by \cite{liu2026linear},
\[
  \IO_i(M)=\ell(n-1)-\sum_{j\ne i}z_j(W).
\]
This motivates
\[
  \lambda_i:=\max_{W\in\mathcal{W}_i}\sum_{j\ne i}z_j(W),
\]
and therefore
\[
  \gamma_i(\mathcal{C})=\ell(n-1)-\lambda_i.
\]
With
\[
  \lambda_{\avg}:=\frac1n\sum_{i=1}^n\lambda_i,
  \qquad
  \lambda_{\min}:=\min_{i\in[n]}\lambda_i,
\]
we obtain
\[
  \gamma_{\avg}(\mathcal{C})=\ell(n-1)-\lambda_{\avg},
  \qquad
  \gamma_{\max}(\mathcal{C})=\ell(n-1)-\lambda_{\min}.
\]

Finally, we record the realization facts that will be used later. Any family of
$\ell$-dimensional subspaces
\[
  \mathcal{H}_1,\dots,\mathcal{H}_n\le\mathbb{V}
\]
satisfying
\[
  \sum_{j\in J}\mathcal{H}_j=\mathbb{V}
  \qquad\text{for every }J\subseteq[n]\text{ with }|J|=r
\]
gives rise to an $(n,n-r,\ell)$ MDS array code over $\F_q$. Moreover, if for each $i\in[n]$
one is given a set
\[
  X_i\subseteq\mathbb{P}(\mathcal{H}_i)
\]
consisting of $\ell$ distinct points spanning $\mathcal{H}_i$, then one obtains a
parity-check realization whose induced projective column set at node $i$ is exactly $X_i$.

From this point on, we work entirely in the intrinsic language of the node subspaces
$\mathcal{H}_i$, the feasible repair subspaces $\mathcal{W}_i$, and the projective column
sets $X_i$.

\section{The Incidence-Multiplicity Bound}\label{sec:imbound}

The projective counting bound of \cite{liu2026linear} is sharp in the two-parity case, but
for $r\ge 3$ and $\ell\ge 2$ its correction term is too large to be compatible with the
general MDS length bound. Indeed, by \cite[Remark~3.2]{liu2026linear}, equality in the projective counting bound
would force
\[
n\ge 1+\frac{q^{(r-1)\ell}-1}{q-1},
\]
whereas every $(n,n-r,\ell)$ MDS array code over $\F_q$ satisfies
\[
n\le q^\ell+r-1
\]
by \cite[Lemma~3.3]{liu2026linear}. When $r\ge 3$ and $\ell\ge 2$, these two inequalities are incompatible, because the lower
bound involves the term $q^{(r-1)\ell}$, whereas the general MDS length bound grows only on the order of $q^\ell$. This order mismatch suggests that the projective packing argument counts
at the wrong scale beyond the two-parity case, and motivates a sharper counting
principle.

The next proposition refines the projective counting argument of \cite{liu2026linear}.
Instead of counting projective points inside a repair subspace, we pass to the quotient by
that subspace and count incidences in the dual projective space.

\begin{proposition}\label{prop:dual-counting}
Let $\mathcal{H}_1,\dots,\mathcal{H}_n\le \mathbb{V}$ be $\ell$-dimensional subspaces over
$\F_q$ such that
\[
\sum_{j\in J}\mathcal{H}_j=\mathbb{V}
\qquad\text{for every }J\subseteq[n]\text{ with }|J|=r.
\]
Fix $i\in[n]$, and let $W\le \mathbb{V}$ be any subspace of codimension $\ell$, i.e., $\dim_{\F_q}W=r\ell-\ell$.
 For each $j\ne i$, set
\[
t_j:=\dim(W\cap \mathcal{H}_j).
\]
Then
\[
\sum_{j\ne i}\frac{q^{t_j}-1}{q-1}
\;\le\;
(r-1)\frac{q^\ell-1}{q-1}.
\]
In particular,
\[
\sum_{j\ne i} t_j
\;\le\;
(r-1)\frac{q^\ell-1}{q-1}.
\]
\end{proposition}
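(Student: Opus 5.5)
The plan is to dualize: pass to the quotient $\mathbb{V}/W$ and count incidences between nonzero linear functionals and the node subspaces. Let $U:=W^{\perp}=\{f\in\mathbb{V}^*: f|_W=0\}$. Since $W$ has codimension $\ell$, $U$ has dimension $\ell$, so $\mathbb{P}(U)$ has exactly $(q^\ell-1)/(q-1)$ points. For each $j\ne i$, define
\[
U_j:=\{f\in U: f|_{\mathcal{H}_j}=0\}=(W+\mathcal{H}_j)^{\perp}.
\]
The first step is to compute its dimension. We have $\dim(W+\mathcal{H}_j)=(r-1)\ell+\ell-t_j=r\ell-t_j$, so $\dim U_j=t_j$. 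Hence $\mathbb{P}(U_j)\subseteq\mathbb{P}(U)$ contains exactly $(q^{t_j}-1)/(q-1)$ points.

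The key step uses the MDS spanning condition as an incidence-multiplicity statement: every point $\langle f\rangle\in\mathbb{P}(U)$ lies in at most $r-1$ of the subspaces $\mathbb{P}(U_j)$. Suppose instead that $\langle f\rangle\in\mathbb{P}(U_j)$ for $r$ distinct indices $j\in J$. Then $f$ vanishes on $\sum_{j\in J}\mathcal{H}_j=\mathbb{V}$, so $f=0$, a contradiction. Double counting the incidence pairs $(\langle f\rangle, j)$ with $\langle f\rangle\in\mathbb{P}(U_j)$ then gives
\[
\sum_{j\ne i}\frac{q^{t_j}-1}{q-1}=\sum_{\langle f\rangle\in\mathbb{P}(U)}\bigl|\{j\ne i:\langle f\rangle\in\mathbb{P}(U_j)\}\bigr|\le (r-1)\frac{q^\ell-1}{q-1}.
\]
This argument never uses $W\cap\mathcal{H}_i=\{0\}$ or the excluded index $i$; the bound holds for any codimension-$\ell$ subspace $W$.

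For the ``in particular'' clause, observe that $(q^{t}-1)/(q-1)=1+q+\cdots+q^{t-1}\ge t$ for every integer $t\ge 0$, with value $0$ when $t=0$. Summing this over $j\ne i$ and applying the displayed inequality gives $\sum_{j\ne i}t_j\le(r-1)(q^\ell-1)/(q-1)$.

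There is no serious obstacle. The only point needing care is the dimension bookkeeping for $U_j$, namely the identity $\dim(W+\mathcal{H}_j)^{\perp}=\dim(W\cap\mathcal{H}_j)$ when $\dim W+\dim\mathcal{H}_j=\dim\mathbb{V}$. This follows from the standard formula $\dim(A+B)=\dim A+\dim B-\dim(A\cap B)$ together with $\dim A^{\perp}=\dim\mathbb{V}-\dim A$.
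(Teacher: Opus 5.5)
Your proposal is correct and is essentially the paper's argument: the paper passes to $Q=\mathbb{V}/W$ and takes annihilators $U_j=\pi(\mathcal{H}_j)^\perp\le Q^*$, which under the canonical identification $(\mathbb{V}/W)^*\cong W^\perp$ are exactly your $U_j=(W+\mathcal{H}_j)^\perp\le W^\perp$. From there the dimension count $\dim U_j=t_j$, the observation that $r$-wise spanning forces every point to lie in at most $r-1$ of the $\mathbb{P}(U_j)$, the incidence double count, and the step $t\le (q^t-1)/(q-1)$ all match the paper.
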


\begin{proof}
  Set
  \[
  Q:=\mathbb{V}/W,
  \]
  so that $\dim_{\F_q}Q=\ell$, and let $\pi:\mathbb{V}\to Q$ be the quotient map. For each $j\ne i$, define
\[
K_j:=\pi(\mathcal{H}_j)=(\mathcal{H}_j+W)/W\le Q.
\]
Since $\ker(\pi|_{\mathcal{H}_j})=\mathcal{H}_j\cap W$,
 we have
\[
\dim K_j
=
\dim \mathcal{H}_j-\dim(\mathcal{H}_j\cap W)
=
\ell-t_j.
\]

Now pass to the dual space $Q^*$. For each $j\ne i$, define
\[
U_j:=K_j^\perp\le Q^*,
\]
where
\[
K_j^\perp:=\{\,\varphi\in Q^*:\ \varphi(x)=0\text{ for all }x\in K_j\,\}.
\]
Then
\[
\dim U_j
=
\dim Q-\dim K_j
=
\ell-(\ell-t_j)
=
t_j.
\]

We claim that for every subset $J\subseteq[n]\setminus\{i\}$ with $|J|=r$,
\[
\bigcap_{j\in J}U_j=\{0\}.
\]
Indeed, by the $r$-wise spanning assumption,
\[
\sum_{j\in J}\mathcal{H}_j=\mathbb{V}.
\]
Applying $\pi$, we obtain
\[
\sum_{j\in J}K_j
=
\sum_{j\in J}\pi(\mathcal{H}_j)
=
\pi\!\left(\sum_{j\in J}\mathcal{H}_j\right)
=
Q.
\]
Taking annihilators in $Q^*$ gives
\[
\bigcap_{j\in J}U_j
=
\bigcap_{j\in J}K_j^\perp
=
\left(\sum_{j\in J}K_j\right)^\perp
=
Q^\perp
=
\{0\}.
\]

Therefore, in the projective space $\mathbb{P}(Q^*)$, each projective point lies in at
most $r-1$ of the projective subspaces $\mathbb{P}(U_j)$, $j\ne i$; otherwise, if some
point lay in $\mathbb{P}(U_j)$ for $r$ distinct indices, then a nonzero vector
representing that point would belong to the intersection of those $r$ corresponding
subspaces, contradicting the claim.

Counting incidences between projective points of $\mathbb{P}(Q^*)$ and the family
$\{\mathbb{P}(U_j):j\ne i\}$, we obtain
\[
\sum_{j\ne i} |\mathbb{P}(U_j)|
\;\le\;
(r-1)|\mathbb{P}(Q^*)|.
\]
Since
\[
|\mathbb{P}(U_j)|=\frac{q^{\dim U_j}-1}{q-1}
=
\frac{q^{t_j}-1}{q-1},
\qquad
|\mathbb{P}(Q^*)|=\frac{q^\ell-1}{q-1},
\]
it follows that
\[
\sum_{j\ne i}\frac{q^{t_j}-1}{q-1}
\;\le\;
(r-1)\frac{q^\ell-1}{q-1}.
\]

Finally, for every integer $t\ge 0$,
\[
t \le \frac{q^t-1}{q-1},
\]
hence
\[
\sum_{j\ne i} t_j
\;\le\;
\sum_{j\ne i}\frac{q^{t_j}-1}{q-1}
\;\le\;
(r-1)\frac{q^\ell-1}{q-1}.
\]
This completes the proof.
\end{proof}

\begin{remark}\label{rem:multiplicity-equality}
  Keep the notation of the proof of Proposition~\ref{prop:dual-counting}. If
  \[
  \sum_{j\ne i} t_j=(r-1)\frac{q^\ell-1}{q-1},
  \]
  then both inequalities
  \[
  \sum_{j\ne i} t_j
  \;\le\;
  \sum_{j\ne i}\frac{q^{t_j}-1}{q-1}
  \;\le\;
  (r-1)\frac{q^\ell-1}{q-1}
  \]
  must be equalities. Consequently:
  \begin{enumerate}[label=(\arabic*)]
  \item $t_j\in\{0,1\}$ for every $j\ne i$;
  \item every point of $\mathbb{P}(Q^*)$ lies in exactly $r-1$ of the subspaces
  $\mathbb{P}(U_j)$, $j\ne i$;
  \item
  \[
  \#\{\,j\ne i:\dim(W\cap \mathcal{H}_j)=1\,\}
  =
  (r-1)\frac{q^\ell-1}{q-1},
  \]
  and hence equality can hold only if
  \begin{equation}\label{eq:necessary-length-for-equality}
  n\ge 1+(r-1)\frac{q^\ell-1}{q-1}.
  \end{equation}
  \end{enumerate}
  Equivalently, in the equality case, the nonzero subspaces $U_j$ form a multiset in which
  every point of $\mathbb{P}(Q^*)$ occurs with multiplicity exactly $r-1$.
  \end{remark}

The proposition immediately yields the new lower bound for repair bandwidth and repair I/O.

\begin{theorem}[Incidence-multiplicity bound]\label{thm:imbound}
  Let $\mathcal{C}$ be an $(n,k,\ell)$ MDS array code over $\F_q$ with redundancy
  $r=n-k\ge 2$. Then for every $i\in[n]$,
  \[
  \beta_i(\mathcal{C}),\ \gamma_i(\mathcal{C})
  \;\ge\;
  \ell(n-1)-(r-1)\frac{q^\ell-1}{q-1}.
  \]
  Hence
  \[
  \beta_{\avg}(\mathcal{C}),\ \beta_{\max}(\mathcal{C}),\ 
  \gamma_{\avg}(\mathcal{C}),\ \gamma_{\max}(\mathcal{C})
  \;\ge\;
  \ell(n-1)-(r-1)\frac{q^\ell-1}{q-1}.
  \]
  \end{theorem}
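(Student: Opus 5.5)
The plan is to derive the theorem directly from Proposition~\ref{prop:dual-counting} via the intrinsic reformulation of Section~\ref{subsec:intrinsic}. Fix $i\in[n]$. Since $\mathcal{C}$ is MDS, the node subspaces $\mathcal{H}_1,\dots,\mathcal{H}_n$ are $\ell$-dimensional and satisfy the $r$-wise spanning condition. This is exactly the hypothesis of the proposition.

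\textbf{Bandwidth.} Recall $\beta_i(\mathcal{C})=\ell(n-1)-\alpha_i$, where $\alpha_i$ is the maximum of $\sum_{j\ne i}\dim(W\cap\mathcal{H}_j)$ over $W\in\mathcal{W}_i$.
\begin{itemize}
\item Every $W\in\mathcal{W}_i$ has dimension $(r-1)\ell$, i.e.\ codimension $\ell$.
\item Proposition~\ref{prop:dual-counting} then gives $\sum_{j\ne i}\dim(W\cap\mathcal{H}_j)\le (r-1)\frac{q^\ell-1}{q-1}$; the transversality condition $W\cap\mathcal{H}_i=\{0\}$ is not even needed.
\item Taking the maximum over the finite set $\mathcal{W}_i$ yields $\alpha_i\le (r-1)\frac{q^\ell-1}{q-1}$. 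One should note that $\mathcal{W}_i$ is nonempty: any $M\in\mathcal{M}_i$ supplies $\ker M\in\mathcal{W}_i$, and $\mathcal{M}_i\neq\emptyset$ because $H_i$ has full column rank. Hence $\beta_i(\mathcal{C})$ satisfies the stated bound.
\end{itemize}

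\textbf{I/O.} Recall $\gamma_i(\mathcal{C})=\ell(n-1)-\lambda_i$. It suffices to show $z_j(W)\le\dim(W\cap\mathcal{H}_j)$ for every $W$ and $j\ne i$, so that $\lambda_i\le\alpha_i$.
\begin{itemize}
\item $z_j(W)$ counts the columns $h_{j,t}$ lying in $W$.
\item Because $H_j$ has full column rank, these columns are linearly independent vectors in $W\cap\mathcal{H}_j$, so their number is at most its dimension.
\end{itemize}
Alternatively, this is the general inequality $\IO_i(M)\ge\BW_i(M)$, which gives $\gamma_i\ge\beta_i$ directly.

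\textbf{Aggregate bounds.} The per-node bound is uniform in $i$, so it passes to the average and to the maximum of the $\beta_i$ and $\gamma_i$. This establishes all four statements.

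There is essentially no obstacle, since the proposition does the counting work. The only care needed is bookkeeping: confirming the codimension of elements of $\mathcal{W}_i$, and checking nonemptiness so that the maxima are well defined.
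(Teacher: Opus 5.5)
Your proposal is correct and follows the paper's proof: apply Proposition~\ref{prop:dual-counting} to every $W\in\mathcal{W}_i$ to get $\alpha_i\le(r-1)\frac{q^\ell-1}{q-1}$, transfer the bound to $\gamma_i$ via $\gamma_i\ge\beta_i$, and then pass to averages and maxima. Your extra bookkeeping is valid but goes beyond what the paper spells out: the nonemptiness of $\mathcal{W}_i$ (via a left inverse of $H_i$) and the direct inequality $z_j(W)\le\dim(W\cap\mathcal{H}_j)$.
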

  
  \begin{proof}
  Fix $i\in[n]$. For each $W\in\mathcal{W}_i$, Proposition~\ref{prop:dual-counting} gives
  \[
  \sum_{j\ne i}\dim(W\cap \mathcal{H}_j)\le (r-1)\frac{q^\ell-1}{q-1}.
  \]
  Taking the maximum over $W\in\mathcal{W}_i$, we obtain
  \[
  \alpha_i\le (r-1)\frac{q^\ell-1}{q-1}.
  \]
  Hence
  \[
  \beta_i(\mathcal{C})
  =
  \ell(n-1)-\alpha_i
  \;\ge\;
  \ell(n-1)-(r-1)\frac{q^\ell-1}{q-1}.
  \]
  Since $\gamma_i(\mathcal{C})\ge \beta_i(\mathcal{C})$, the same lower bound also holds for
  $\gamma_i(\mathcal{C})$. The bounds for $\beta_{\avg}(\mathcal{C})$, $\beta_{\max}(\mathcal{C})$,
  $\gamma_{\avg}(\mathcal{C})$, and $\gamma_{\max}(\mathcal{C})$ follow immediately.
  \end{proof}

  \begin{remark}
    When $r\ge 3$ and $\ell\ge 2$, Theorem~\ref{thm:imbound} is strictly stronger than the projective counting
    bound of \cite[Theorem~1.1]{liu2026linear}; when $r=2$, the two bounds coincide.
    \end{remark}

    We have the following necessary condition for attaining the incidence-multiplicity bound.

    \begin{corollary}\label{cor:r-le-q}
      Assume that $\ell\ge 2$. Let $\mathcal{C}$ be an $(n,k,\ell)$ MDS array code over $\F_q$
      with redundancy $r=n-k$. If the incidence-multiplicity bound is attained by at least one
      of
      \[
      \beta_i(\mathcal{C}),\ \gamma_i(\mathcal{C})\quad (i\in[n]),
      \qquad
      \beta_{\avg}(\mathcal{C}),\ \beta_{\max}(\mathcal{C}),\ 
      \gamma_{\avg}(\mathcal{C}),\ \gamma_{\max}(\mathcal{C}),
      \]
      then necessarily
      \[
      r\le q.
      \]
      \end{corollary}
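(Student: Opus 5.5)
The plan is to combine the equality analysis in Remark~\ref{rem:multiplicity-equality}, which forces the length condition \eqref{eq:necessary-length-for-equality}, with the general MDS length bound $n\le q^\ell+r-1$ from \cite[Lemma~3.3]{liu2026linear}.

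\textbf{Step 1: reduce to a single node.} If any of the listed aggregate quantities attains the bound, then some single $\beta_i$ or $\gamma_i$ does. For the average or the maximum, Theorem~\ref{thm:imbound} shows every $\beta_i,\gamma_i$ is at least the bound. Equality of the average therefore forces equality at every $i$. Equality of the maximum forces equality at every $i$ as well, because each term is at least the bound and the largest term equals it.

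\textbf{Step 2: pass from $\gamma_i$ to $\beta_i$.} Since $\gamma_i\ge\beta_i\ge$ bound, equality for $\gamma_i$ gives equality for $\beta_i$. So it suffices to treat $\beta_i(\mathcal{C})=\ell(n-1)-(r-1)\frac{q^\ell-1}{q-1}$, i.e.\ $\alpha_i=(r-1)\frac{q^\ell-1}{q-1}$. Choose $W\in\mathcal{W}_i$ attaining $\alpha_i$ ($\mathcal{W}_i$ is nonempty and finite). Then $W$ has codimension $\ell$ and $\sum_{j\ne i}\dim(W\cap\mathcal{H}_j)=(r-1)\frac{q^\ell-1}{q-1}$. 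Remark~\ref{rem:multiplicity-equality}(3) now yields $n\ge 1+(r-1)\frac{q^\ell-1}{q-1}$.

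\textbf{Step 3: compare with the length bound and conclude.} Combining with $n\le q^\ell+r-1$ gives
\[
(r-1)\frac{q^\ell-1}{q-1}\le q^\ell+r-2 .
\]
Suppose $r\ge q+1$, so $r-1\ge q$. The left side is then at least $q\cdot\frac{q^\ell-1}{q-1}=q^\ell+q^{\ell-1}+\dots+q$. Write $(r-1)\frac{q^\ell-1}{q-1}-(q^\ell+r-2)$ as
\[
(r-1)\Bigl(\tfrac{q^\ell-1}{q-1}-1\Bigr)-(q^\ell-1).
\]
Since $\ell\ge2$, we have $\frac{q^\ell-1}{q-1}-1=q+\dots+q^{\ell-1}=q\cdot\frac{q^{\ell-1}-1}{q-1}$. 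With $r-1\ge q$, the first term is at least $q^2\frac{q^{\ell-1}-1}{q-1}$. We need this to exceed $q^\ell-1$.

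For $\ell=2$ the comparison is $q^2$ versus $q^2-1$, and it holds strictly. For larger $\ell$, note $q^2\frac{q^{\ell-1}-1}{q-1}\ge q^2\cdot q^{\ell-2}=q^\ell>q^\ell-1$. So in all cases the difference is positive, which is a contradiction, and hence $r\le q$.

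The main care point is this final elementary inequality. It must be strict, and it is exactly where $\ell\ge2$ is used: for $\ell=1$ the inequality degenerates, which is why the corollary assumes $\ell\ge 2$.
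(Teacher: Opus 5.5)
Your proof is correct and follows the paper's argument: combine the equality-case length condition $n\ge 1+(r-1)\frac{q^\ell-1}{q-1}$ from Remark~\ref{rem:multiplicity-equality} with the MDS length bound $n\le q^\ell+r-1$, and deduce from $(r-1)\bigl(\frac{q^\ell-1}{q-1}-1\bigr)\le q^\ell-1$ that $r\le q$. The paper does this last step directly via $\frac{q^\ell-1}{q-1}-1\ge q^{\ell-1}$ instead of by contradiction, and your Steps 1--2 spell out the reduction from the aggregate and I/O quantities to a single $\alpha_i$, which the paper leaves implicit.
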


      \begin{proof}
        If the incidence-multiplicity bound is attained by one of the quantities listed above, then by Inequality~\eqref{eq:necessary-length-for-equality}, one has
        \[
        n\ge 1+(r-1)\frac{q^\ell-1}{q-1}.
        \]
        On the other hand, \cite[Lemma~3.3]{liu2026linear} gives
        \[
        n\le q^\ell+r-1.
        \]
        Hence
        \[
        (r-1)\left(\frac{q^\ell-1}{q-1}-1\right)\le q^\ell-1.
        \]
        Since $\ell\ge 2$,
        \[
        \frac{q^\ell-1}{q-1}-1=q+q^2+\cdots+q^{\ell-1}\ge q^{\ell-1},
        \]
        so
        \[
        (r-1)q^{\ell-1}\le q^\ell-1<q^\ell.
        \]
        Therefore \(r-1<q\), i.e. \(r\le q\).
        \end{proof}

        We close this section by noting that Proposition~\ref{prop:dual-counting} extends naturally
to higher-dimensional subspaces.
        
        \begin{proposition}\label{prop:gaussian-hierarchy}
        In the setting of Proposition~\ref{prop:dual-counting}, for every integer \(s\) with
        \(1\le s\le \ell\), one has
        \[
        \sum_{j\ne i}\binom{t_j}{s}_q
        \;\le\;
        (r-1)\binom{\ell}{s}_q,
        \]
        where \(\binom{a}{s}_q\) denotes the Gaussian binomial coefficient, defined by
\[
\binom{a}{s}_q
:=
\frac{(q^a-1)(q^a-q)\cdots(q^a-q^{s-1})}
     {(q^s-1)(q^s-q)\cdots(q^s-q^{s-1})}
\]
for \(a\ge s\), and \(\binom{a}{s}_q:=0\) for \(a<s\).
In particular, the case \(s=1\) recovers Proposition~\ref{prop:dual-counting}.
        \end{proposition}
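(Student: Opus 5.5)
We reuse the setup in the proof of Proposition~\ref{prop:dual-counting} and replace projective points of $\mathbb{P}(Q^*)$ by $s$-dimensional subspaces of $Q^*$. As before, set $Q:=\mathbb{V}/W$, with $\dim Q=\ell$. For each $j\ne i$ let $K_j:=\pi(\mathcal{H}_j)$ and $U_j:=K_j^\perp\le Q^*$, so that $\dim U_j=t_j$. The claim proved there carries over verbatim: for every $J\subseteq[n]\setminus\{i\}$ with $|J|=r$ we have $\bigcap_{j\in J}U_j=\{0\}$. This uses only the $r$-wise spanning hypothesis and the identity $(\sum K_j)^\perp=\bigcap K_j^\perp$.

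Fix $s$ with $1\le s\le \ell$, and let $\mathcal{G}_s$ be the set of $s$-dimensional subspaces of $Q^*$, so that $|\mathcal{G}_s|=\binom{\ell}{s}_q$. We count incidence pairs $(S,j)$ with $S\in\mathcal{G}_s$, $j\ne i$, and $S\le U_j$.

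Counting by $j$, the number of $s$-dimensional subspaces of the $t_j$-dimensional space $U_j$ is $\binom{t_j}{s}_q$. This holds with the paper's convention, which gives $0$ when $t_j<s$. So the total number of incidences is $\sum_{j\ne i}\binom{t_j}{s}_q$.

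Counting by $S$, any $S\in\mathcal{G}_s$ is a nonzero subspace because $s\ge1$. If $S$ were contained in $U_j$ for $r$ distinct indices $j$, it would lie in their intersection, which is $\{0\}$. That is a contradiction, so each $S$ lies in at most $r-1$ of the $U_j$, and the total number of incidences is at most $(r-1)\binom{\ell}{s}_q$.

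Comparing the two counts gives the inequality. For $s=1$, $\binom{t}{1}_q=(q^t-1)/(q-1)$, which recovers the first inequality of Proposition~\ref{prop:dual-counting}.

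The only point needing care is the boundary bookkeeping. When $t_j<s$ the term vanishes, which matches the convention. The count of $s$-subspaces of an $a$-dimensional space is the standard Gaussian binomial formula, obtained by counting ordered bases.

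The argument works for any subspace $W$ of codimension $\ell$; it does not require $W\cap\mathcal{H}_i=\{0\}$, exactly as in Proposition~\ref{prop:dual-counting}.
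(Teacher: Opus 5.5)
Your proof is correct and takes the same approach as the paper. Both reuse the quotient/annihilator setup of Proposition~\ref{prop:dual-counting}, observe that every $s$-dimensional subspace of $Q^*$ lies in at most $r-1$ of the $U_j$ because any $r$ of them intersect trivially, and double count incidences to obtain $\sum_{j\ne i}\binom{t_j}{s}_q\le (r-1)\binom{\ell}{s}_q$.
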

        
        \begin{proof}
        Keep the notation of the proof of Proposition~\ref{prop:dual-counting}. Fix
        \(1\le s\le \ell\). Since the intersection of any \(r\) of the subspaces \(U_j\) is trivial,
        every \(s\)-dimensional subspace of \(Q^*\) is contained in at most \(r-1\) of the
        subspaces \(U_j\). Counting incidences between \(s\)-dimensional subspaces of \(Q^*\) and
        the family \(\{U_j:j\ne i\}\), we obtain
        \[
        \sum_{j\ne i}\#\{\,L\le U_j:\dim L=s\,\}
        \;\le\;
        (r-1)\#\{\,L\le Q^*:\dim L=s\,\}.
        \]
        Now
        \[
        \#\{\,L\le U_j:\dim L=s\,\}=\binom{t_j}{s}_q,
        \qquad
        \#\{\,L\le Q^*:\dim L=s\,\}=\binom{\ell}{s}_q,
        \]
        and the result follows.
        \end{proof}

        \section{Sharp Constructions from Field Reduction of a Normal Rational Curve}\label{sec:construction}
        We now show that the incidence-multiplicity bound is sharp in a broad parameter range.
        The construction is a higher-redundancy generalization of the two-parity finite-geometric
        construction in \cite{liu2026linear}.
        
        Write
        \[
        t_\ell(q):=\frac{q^\ell-1}{q-1}.
        \]
        Throughout this section, we work in
        \[
        \mathbb{V}:=\F_{q^\ell}^r,
        \]
        viewed as an $\F_q$-vector space of dimension $r\ell$.
        
        Consider the standard normal rational curve in $\PG(r-1,q^\ell)$, namely the image of
        \[
        \nu:\PG(1,q^\ell)\longrightarrow \PG(r-1,q^\ell),\qquad
        [s:t]\longmapsto [s^{r-1}:s^{r-2}t:\cdots:st^{r-2}:t^{r-1}].
        \]
        Its $\F_{q^\ell}$-rational points are
        \[
        \{[1:c:c^2:\cdots:c^{r-1}]:c\in\F_{q^\ell}\}\cup\{[0:\cdots:0:1]\}.
        \]
        For each $c\in\F_{q^\ell}$, define
        \[
        \mathcal{H}_c:=\{x(1,c,c^2,\dots,c^{r-1}):x\in\F_{q^\ell}\},
        \qquad
        \mathcal{H}_\infty:=\{(0,\dots,0,x):x\in\F_{q^\ell}\}.
        \]
        Then each $\mathcal{H}_c$ and $\mathcal{H}_\infty$ is an $\ell$-dimensional $\F_q$-subspace
        of $\mathbb{V}$, obtained by field reduction from the corresponding
        $\F_{q^\ell}$-rational point on the normal rational curve. When $r=2$, this specializes
        to the projective-line/Desarguesian-spread picture used in \cite{liu2026linear}.
        
        \begin{lemma}\label{lem:nrc-mds}
        The sum of any $r$ distinct subspaces among
        \[
        \{\mathcal{H}_c:c\in\F_{q^\ell}\}\cup\{\mathcal{H}_\infty\}
        \]
        is direct. In particular, they form an $(q^\ell+1,q^\ell+1-r,\ell)$ MDS
        array-code skeleton over $\F_q$.
        \end{lemma}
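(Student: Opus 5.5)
The plan is to reduce the $\F_q$-statement to a statement over $\F_{q^\ell}$ and then apply the Vandermonde determinant. Each $\mathcal{H}_c$ and $\mathcal{H}_\infty$ is by definition a one-dimensional $\F_{q^\ell}$-subspace of $\mathbb{V}=\F_{q^\ell}^r$, spanned respectively by
\[
v_c:=(1,c,c^2,\dots,c^{r-1})\quad\text{and}\quad v_\infty:=(0,\dots,0,1).
\]
Each also has $\F_q$-dimension $\ell$, since $x\mapsto xv$ is an $\F_q$-linear injection $\F_{q^\ell}\to\mathbb{V}$ for any nonzero $v$.

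Take $r$ distinct indices from $\F_{q^\ell}\cup\{\infty\}$. I will show that the corresponding vectors $v_{c_1},\dots,v_{c_r}$ are $\F_{q^\ell}$-linearly independent.

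\begin{itemize}
\item If $\infty$ is not among the indices, the $r\times r$ matrix with rows $v_{c_m}$ is a Vandermonde matrix with distinct nodes. Its determinant $\prod_{a<b}(c_b-c_a)$ is therefore nonzero.
\item If $\infty$ is among them, say together with $c_1,\dots,c_{r-1}$, expand the determinant along the row $v_\infty$. This leaves, up to sign, the $(r-1)\times(r-1)$ Vandermonde determinant in $c_1,\dots,c_{r-1}$, which is again nonzero.
\end{itemize}

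Independence of the $v$'s over $\F_{q^\ell}$ gives the directness of the sum. Suppose $\sum_m x_m v_{c_m}=0$ with $x_m\in\F_{q^\ell}$. Independence forces every $x_m=0$, so the only decomposition of $0$ as a sum of elements of the $\mathcal{H}_{c_m}$ is the trivial one. This is exactly directness over $\F_q$, since every $\F_q$-relation among elements of these subspaces is in particular an $\F_{q^\ell}$-relation of this form.

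Consequently the sum has $\F_q$-dimension $r\ell=\dim_{\F_q}\mathbb{V}$, so it equals $\mathbb{V}$. This is the $r$-wise spanning condition, with $n=q^\ell+1$ subspaces. By the realization fact recorded in Section~\ref{subsec:intrinsic}, it yields a $(q^\ell+1,q^\ell+1-r,\ell)$ MDS array code. If $r>q^\ell+1$ the statement is vacuous, so one may assume $r\le q^\ell+1$.

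I expect no real obstacle. The only points to state carefully are the point at infinity, handled by the cofactor expansion, and the passage from $\F_{q^\ell}$-independence to $\F_q$-directness.
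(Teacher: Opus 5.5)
Your proof is correct and takes essentially the same route as the paper. You reduce to $\F_{q^\ell}$-linear independence of $(1,c,\dots,c^{r-1})$ and $(0,\dots,0,1)$ and apply the Vandermonde determinant, treating $\infty$ by cofactor expansion, where the paper instead reads off the first $r-1$ coordinates to get an $(r-1)\times(r-1)$ Vandermonde system (the same computation). Your explicit count $\dim_{\F_q}=r\ell$, showing the direct sum equals $\mathbb{V}$, also cleanly justifies the ``in particular'' clause, which the paper leaves implicit.
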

        
        \begin{proof}
        We consider two cases. First, let $c_1,\dots,c_r\in\F_{q^\ell}$ be distinct. Suppose
        \[
        v_1+\cdots+v_r=0,
        \qquad v_j\in\mathcal{H}_{c_j}.
        \]
        Writing
        \[
        v_j=x_j(1,c_j,c_j^2,\dots,c_j^{r-1}),
        \qquad x_j\in\F_{q^\ell},
        \]
        we obtain
        \[
        \sum_{j=1}^r x_j(1,c_j,c_j^2,\dots,c_j^{r-1})=0.
        \]
        Equivalently,
        \[
        \begin{bmatrix}
        1 & 1 & \cdots & 1 \\
        c_1 & c_2 & \cdots & c_r \\
        c_1^2 & c_2^2 & \cdots & c_r^2 \\
        \vdots & \vdots & & \vdots \\
        c_1^{r-1} & c_2^{r-1} & \cdots & c_r^{r-1}
        \end{bmatrix}
        \begin{bmatrix}
        x_1\\ \vdots\\ x_r
        \end{bmatrix}
        =0.
        \]
        Since this is a Vandermonde matrix with distinct parameters $c_1,\dots,c_r$, it is
        invertible over $\F_{q^\ell}$. Hence $x_1=\cdots=x_r=0$, so the sum is direct.
        
        Now let $c_1,\dots,c_{r-1}\in\F_{q^\ell}$ be distinct, and consider also
        $\mathcal{H}_\infty$. Suppose
        \[
        v_1+\cdots+v_{r-1}+v_\infty=0,
        \qquad v_j\in\mathcal{H}_{c_j},\quad v_\infty\in\mathcal{H}_\infty.
        \]
        Writing
        \[
        v_j=x_j(1,c_j,c_j^2,\dots,c_j^{r-1}),
        \qquad
        v_\infty=(0,\dots,0,x_\infty),
        \]
        with $x_1,\dots,x_{r-1},x_\infty\in\F_{q^\ell}$, we obtain
        \[
        \sum_{j=1}^{r-1}x_j(1,c_j,c_j^2,\dots,c_j^{r-1})+(0,\dots,0,x_\infty)=0.
        \]
        Equivalently,
        \begin{equation}\label{eq:vandermonde-system}
        \begin{bmatrix}
        1 & 1 & \cdots & 1 & 0 \\
        c_1 & c_2 & \cdots & c_{r-1} & 0 \\
        c_1^2 & c_2^2 & \cdots & c_{r-1}^2 & 0 \\
        \vdots & \vdots & & \vdots & \vdots \\
        c_1^{r-1} & c_2^{r-1} & \cdots & c_{r-1}^{r-1} & 1
        \end{bmatrix}
        \begin{bmatrix}
        x_1\\ \vdots\\ x_{r-1}\\ x_\infty
        \end{bmatrix}
        =0.
        \end{equation}
        The first $r-1$ rows already give
        \[
        \begin{bmatrix}
        1 & 1 & \cdots & 1 \\
        c_1 & c_2 & \cdots & c_{r-1} \\
        c_1^2 & c_2^2 & \cdots & c_{r-1}^2 \\
        \vdots & \vdots & & \vdots \\
        c_1^{r-2} & c_2^{r-2} & \cdots & c_{r-1}^{r-2}
        \end{bmatrix}
        \begin{bmatrix}
        x_1\\ \vdots\\ x_{r-1}
        \end{bmatrix}
        =0.
        \]
        Since this is a Vandermonde matrix with distinct parameters $c_1,\dots,c_{r-1}$, we get
        $x_1=\cdots=x_{r-1}=0$. The last row of Equation~\eqref{eq:vandermonde-system} then gives $x_\infty=0$. Hence the sum is direct.
        
        Thus the sum of any $r$ distinct subspaces among
        \[
        \{\mathcal{H}_c:c\in\F_{q^\ell}\}\cup\{\mathcal{H}_\infty\}
        \]
        is direct.
        \end{proof}
        
        Next we define the repair subspaces that will realize equality in the
        incidence-multiplicity bound. Let
        \[
        \Sigma:=\{x\in\F_{q^\ell}^\times:N_{\F_{q^\ell}/\F_q}(x)=1\}\le \F_{q^\ell}^\times.
        \]
        Then $\Sigma$ is the kernel of the norm map
        \[
        N_{\F_{q^\ell}/\F_q}:\F_{q^\ell}^\times\to\F_q^\times,
        \]
        and
        \[
        |\Sigma|=t_\ell(q)=\frac{q^\ell-1}{q-1}.
        \]
        
        For each $b\in\F_{q^\ell}^\times$, define
        \[
        W_b:=\{(y_0,\dots,y_{r-1})\in\F_{q^\ell}^r:\ y_{r-1}=b\,y_0^q\}.
        \]
        Equivalently,
        \[
        W_b=\ker(L_b),
        \]
        where
        \[
        L_b:\F_{q^\ell}^r\to\F_{q^\ell},\qquad
        L_b(y_0,\dots,y_{r-1})=y_{r-1}-b\,y_0^q.
        \]
        Since the Frobenius map $y_0\mapsto y_0^q$ is $\F_q$-linear, $L_b$ is $\F_q$-linear. Moreover,
        for every $z\in\F_{q^\ell}$,
        \[
        L_b(0,\dots,0,z)=z,
        \]
        so $L_b$ is surjective. Hence $W_b$ is an $\F_q$-subspace of codimension $\ell$ in
        $\mathbb{V}$, and therefore
        \[
        \dim_{\F_q}(W_b)=(r-1)\ell.
        \]
        
        The following lemma describes exactly which node subspaces are hit by a given $W_b$.
        
        \begin{lemma}\label{lem:nrc-intersection}
        Let $b\in\F_{q^\ell}^\times$.
        \begin{enumerate}[label=(\arabic*)]
        \item For $c\in\F_{q^\ell}^\times$,
        \[
        W_b\cap \mathcal{H}_c\neq\{0\}
        \iff
        c^{r-1}\in b\Sigma.
        \]
        Whenever this holds,
        \[
        \dim_{\F_q}(W_b\cap \mathcal{H}_c)=1.
        \]
        \item One has
        \[
        W_b\cap \mathcal{H}_0=\{0\},
        \qquad
        W_b\cap \mathcal{H}_\infty=\{0\}.
        \]
        \end{enumerate}
        \end{lemma}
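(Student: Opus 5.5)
The plan is to compute the intersection directly from the parametrization of the points of $\mathcal{H}_c$. A vector of $\mathcal{H}_c$ has the form $v=x(1,c,\dots,c^{r-1})$ with $x\in\F_{q^\ell}$. Its first coordinate is $y_0=x$ and its last is $y_{r-1}=xc^{r-1}$. So $v\in W_b$ if and only if
\[
xc^{r-1}=b\,x^q .
\]
Thus $W_b\cap\mathcal{H}_c$ is identified, via the $\F_q$-linear isomorphism $x\mapsto x(1,c,\dots,c^{r-1})$, with the $\F_q$-subspace
\[
S_c:=\{x\in\F_{q^\ell}: b x^q=c^{r-1}x\}.
\]

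For part (1), take $c\neq 0$. A nonzero $x$ lies in $S_c$ exactly when $x^{q-1}=c^{r-1}b^{-1}$. The key fact is that the image of $x\mapsto x^{q-1}$ on $\F_{q^\ell}^\times$ equals $\Sigma$. This holds by Hilbert 90, or by counting: the map is a homomorphism of the cyclic group of order $q^\ell-1$ with kernel $\F_q^\times$. Its image is therefore the unique subgroup of order $(q^\ell-1)/(q-1)$, namely $\Sigma$, and indeed $N(x^{q-1})=N(x)^{q-1}=1$. Hence $S_c\ne\{0\}$ if and only if $c^{r-1}b^{-1}\in\Sigma$, i.e. $c^{r-1}\in b\Sigma$.

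When this holds, the nonzero solutions form a single coset $x_0\F_q^\times$ of the kernel, so $S_c=x_0\F_q$ and it has $\F_q$-dimension exactly $1$. Alternatively, $S_c$ is the root set of the $q$-polynomial $bX^q-c^{r-1}X$ of $q$-degree $1$, so it has at most $q$ elements.

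For part (2), the two special subspaces are handled separately. For $c=0$, the condition becomes $bx^q=0$, forcing $x=0$ since $b\ne0$. For $\mathcal{H}_\infty$, a vector $(0,\dots,0,x)$ lies in $W_b$ iff $x=b\cdot 0^q=0$; this uses $r\ge2$, so that the first and last coordinates are distinct.

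The only step needing care is the identification of the image of the $(q-1)$-power map with $\Sigma$, which I would do by the cyclic-group counting above.
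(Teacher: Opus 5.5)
Your proposal is correct and follows the paper's proof essentially step for step. Like the paper, you parametrize $\mathcal{H}_c$, reduce membership in $W_b$ to $x^{q-1}=b^{-1}c^{r-1}$, use that the $(q-1)$-power map has image $\Sigma$ and kernel $\F_q^\times$ to get the criterion and $\F_q$-dimension $1$, and check $\mathcal{H}_0$ and $\mathcal{H}_\infty$ directly. Your cyclic-group argument that the image is exactly $\Sigma$ fills in a step the paper only asserts.
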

        
        \begin{proof}
        A nonzero vector in $\mathcal{H}_c$ has the form
        \[
        x(1,c,c^2,\dots,c^{r-1}),
        \qquad x\in\F_{q^\ell}^\times.
        \]
        Such a vector lies in $W_b$ exactly when
        \[
        x c^{r-1}=b\,x^q,
        \]
        or equivalently,
        \[
        x^{q-1}=b^{-1}c^{r-1}.
        \]
        The image of the map
        \[
        \F_{q^\ell}^\times\to\F_{q^\ell}^\times,\qquad x\mapsto x^{q-1},
        \]
        is precisely $\Sigma$, so this equation has a nonzero solution if and only if
        \[
        b^{-1}c^{r-1}\in\Sigma,
        \]
        that is,
        \[
        c^{r-1}\in b\Sigma.
        \]
        This proves the nontrivial-intersection criterion.
        
        Now assume $W_b\cap\mathcal{H}_c\neq\{0\}$, and choose $x_0\in\F_{q^\ell}^\times$ such that
        \[
        x_0^{q-1}=b^{-1}c^{r-1}.
        \]
        Then all nonzero solutions are precisely the $\F_q^\times$-multiples of $x_0$, so
        \[
        W_b\cap\mathcal{H}_c
        =
        \{\alpha x_0(1,c,\dots,c^{r-1}):\alpha\in\F_q\},
        \]
        which is $1$-dimensional over $\F_q$.
        
        For $c=0$, a vector in $\mathcal{H}_0$ has the form
        \[
        (x,0,\dots,0),
        \]
        and the defining equation of $W_b$ forces $0=b\,x^q$, hence $x=0$. Thus
        \[
        W_b\cap\mathcal{H}_0=\{0\}.
        \]
        Similarly, a vector in $\mathcal{H}_\infty$ has the form
        \[
        (0,\dots,0,x),
        \]
        and the defining equation of $W_b$ forces $x=0$. Hence
        \[
        W_b\cap\mathcal{H}_\infty=\{0\}.
        \]
        \end{proof}
        
        The preceding lemma partitions the nonzero finite parameters $c\in\F_{q^\ell}^\times$
        according to the hit pattern of the repair subspaces. For each $b\in\F_{q^\ell}^\times$, define
        \[
        C_b:=\{c\in\F_{q^\ell}^\times:c^{r-1}\in b\Sigma\}.
        \]
        
        \begin{lemma}\label{lem:block-partition}
        Assume that $(r-1)\mid(q-1)$. Then the distinct nonempty sets $C_b$ form a partition of
        $\F_{q^\ell}^\times$ into
        \[
        \frac{q-1}{r-1}
        \]
        blocks, each of size
        \[
        (r-1)t_\ell(q).
        \]
        \end{lemma}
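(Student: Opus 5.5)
The plan is to work entirely inside the cyclic group $G:=\F_{q^\ell}^\times$, of order $q^\ell-1$, and to read the sets $C_b$ as preimages of cosets under a power map. Let
\[
\phi:G\to G,\qquad c\mapsto c^{r-1},
\]
and let $\pi:G\to G/\Sigma$ be the quotient map. Then $C_b=\phi^{-1}(b\Sigma)=(\pi\circ\phi)^{-1}(\pi(b))$, so the sets $C_b$ are exactly the fibers of the homomorphism $\psi:=\pi\circ\phi:G\to G/\Sigma$ over the points $\pi(b)$. Since $b$ ranges over all of $G$, the sets $C_b$ range over all fibers, some possibly empty. The nonempty fibers are the cosets of $\ker\psi$, so they partition $G$. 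There are $|\operatorname{im}\psi|$ of them, and each has size $|\ker\psi|$.

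It remains to compute $\operatorname{im}\psi$. First, $G/\Sigma\cong\F_q^\times$ via the norm map, because the norm is surjective. Alternatively, one can see this directly: $G$ is cyclic and $\Sigma$ is its unique subgroup of order $t_\ell(q)$, so $G/\Sigma$ is cyclic of order $q-1$. Second, the image of $\psi$ is the image of the $(r-1)$-th power map on the cyclic group $G/\Sigma$, since $\pi$ is surjective and commutes with powers. In a cyclic group of order $q-1$, the image of $x\mapsto x^{r-1}$ has order $(q-1)/\gcd(r-1,q-1)$. By the hypothesis $(r-1)\mid(q-1)$, this equals $(q-1)/(r-1)$.

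Hence the number of distinct nonempty blocks is $(q-1)/(r-1)$. Each block has size
\[
|\ker\psi|=\frac{|G|}{|\operatorname{im}\psi|}=\frac{(q^\ell-1)(r-1)}{q-1}=(r-1)\,t_\ell(q).
\]
One could also verify the block size directly: $c\in C_b$ means $c^{r-1}\in b\Sigma$. The coset $b\Sigma$ meets $\operatorname{im}\phi$ in $|\Sigma\cap\operatorname{im}\phi|$ elements, and each element of $\operatorname{im}\phi$ has $\gcd(r-1,q^\ell-1)$ preimages under $\phi$. The homomorphism route avoids computing these gcds separately.

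The only delicate step is identifying the image of $(r-1)$-th powers in $G/\Sigma$. One must use the fact that the power map on a cyclic group of order $m$ has image of size $m/\gcd(r-1,m)$, and the divisibility hypothesis is needed precisely to make this gcd equal to $r-1$. Everything else is bookkeeping with fibers of a group homomorphism.
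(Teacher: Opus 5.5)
Your proof is correct and takes essentially the paper's route: both arguments identify $C_b$ as the preimage of a fiber of the $(r-1)$-th power map on the cyclic quotient $\F_{q^\ell}^\times/\Sigma$ of order $q-1$. Both then use the divisibility hypothesis to get an image of size $(q-1)/(r-1)$. The only cosmetic difference is how the block size is obtained: you read it off as $|\ker(\pi\circ\phi)|=|G|/|\operatorname{im}\psi|$, while the paper takes fibers of size $r-1$ in the quotient and lifts each by a factor $|\Sigma|$.
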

        
        \begin{proof}
        The quotient group
        \[
        \F_{q^\ell}^\times/\Sigma
        \]
        is cyclic of order
        \[
        \frac{q^\ell-1}{|\Sigma|}=q-1.
        \]
        Consider the power map
        \[
        \psi:\F_{q^\ell}^\times/\Sigma\to \F_{q^\ell}^\times/\Sigma,
        \qquad
        x\Sigma\mapsto x^{r-1}\Sigma.
        \]
        Since $(r-1)\mid(q-1)$ and the quotient is cyclic of order $q-1$, the kernel of $\psi$ has
        size $r-1$. Hence the image has size
        \[
        \frac{q-1}{r-1},
        \]
        and every nonempty fiber has size $r-1$.
        
        Now $C_b$ depends only on the coset $b\Sigma$, and its image in the quotient is precisely
        the fiber of $\psi$ above $b\Sigma$. Therefore the distinct nonempty sets $C_b$ are in
        bijection with the image of $\psi$, so there are
        \[
        \frac{q-1}{r-1}
        \]
        such blocks. Each fiber in the quotient lifts to
        \[
        (r-1)|\Sigma|=(r-1)t_\ell(q)
        \]
        elements of $\F_{q^\ell}^\times$, proving the claim.
        \end{proof}
        
        \begin{theorem}\label{thm:nrc-attainability}
        Assume that $\ell\ge 2$ and $r\ge 2$, and suppose that
        \[
        (r-1)\mid(q-1)
        \qquad\text{and}\qquad
        \frac{q-1}{r-1}\ge 2.
        \]
        Then for every integer $n$ satisfying
        \[
        2(r-1)t_\ell(q)\le n\le q^\ell+1,
        \]
        there exists an $(n,n-r,\ell)$ MDS array code $\mathcal{C}$ over $\F_q$ such that
        \[
        \beta_{\avg}(\mathcal{C})=\beta_{\max}(\mathcal{C})
        =\gamma_{\avg}(\mathcal{C})=\gamma_{\max}(\mathcal{C})
        =\ell(n-1)-(r-1)t_\ell(q).
        \]
        In particular, $\mathcal{C}$ attains the incidence-multiplicity bound simultaneously for
        the average and worst-case repair bandwidth and for the average and worst-case repair I/O.
        \end{theorem}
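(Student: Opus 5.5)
The plan is to take the $(q^\ell+1)$-node skeleton of Lemma~\ref{lem:nrc-mds}, keep only $n$ of its nodes, and choose the column sets so that the subspaces $W_b$ give the matching upper bound. Since Theorem~\ref{thm:imbound} already gives $\beta_i,\gamma_i\ge \ell(n-1)-(r-1)t_\ell(q)$ for every $i$, it suffices to show $\lambda_i\ge (r-1)t_\ell(q)$ for every node $i$. Indeed, $\lambda_i\le\alpha_i$ because $z_j(W)\le\dim(W\cap\mathcal{H}_j)$, and Proposition~\ref{prop:dual-counting} gives $\alpha_i\le(r-1)t_\ell(q)$. So $\lambda_i\ge(r-1)t_\ell(q)$ forces $\alpha_i=\lambda_i=(r-1)t_\ell(q)$ for all $i$, and all four quantities equal $\ell(n-1)-(r-1)t_\ell(q)$.

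\emph{Choice of nodes.} By Lemma~\ref{lem:block-partition} and the hypothesis $(q-1)/(r-1)\ge 2$, there are two distinct blocks $B_1=C_{b_1}$ and $B_2=C_{b_2}$. Each has size $(r-1)t_\ell(q)$, and they are disjoint subsets of $\F_{q^\ell}^\times$. Let the node set $S$ consist of $B_1\cup B_2$ together with any $n-2(r-1)t_\ell(q)$ further parameters from $(\F_{q^\ell}\cup\{\infty\})\setminus(B_1\cup B_2)$. This is possible exactly because $2(r-1)t_\ell(q)\le n\le q^\ell+1$. Any $r$ of the chosen subspaces still span $\mathbb{V}$ by Lemma~\ref{lem:nrc-mds} (a direct sum of $r$ subspaces of dimension $\ell$ in a space of dimension $r\ell$ is the whole space). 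Hence, by the realization facts of Section~\ref{subsec:intrinsic}, $\{\mathcal{H}_c:c\in S\}$ defines an $(n,n-r,\ell)$ MDS array code.

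\emph{Choice of column sets.} For $c\in B_m$ ($m=1,2$), Lemma~\ref{lem:nrc-intersection} shows that $W_{b_m}\cap\mathcal{H}_c$ is a $1$-dimensional $\F_q$-space, spanned by some $p_c$. Extend $p_c$ to an $\F_q$-basis of $\mathcal{H}_c$ and let $X_c$ be the $\ell$ projective points of this basis; they are distinct because they come from a basis. For the remaining nodes, take $X_c$ from an arbitrary basis. By the realization facts, there is a parity-check matrix inducing exactly these $X_c$.

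\emph{Repair of node $i$.} Suppose $i\notin B_1$; this covers $i\in B_2$, $i=0$, $i=\infty$, and the other finite parameters. Use $W=W_{b_1}$. It has dimension $(r-1)\ell$, and $W\cap\mathcal{H}_i=\{0\}$ by Lemma~\ref{lem:nrc-intersection}, since either $i\in\{0,\infty\}$ or $i^{r-1}\notin b_1\Sigma$. Hence $W\in\mathcal{W}_i$. For every $j\in B_1\subseteq S\setminus\{i\}$ we have $\langle p_j\rangle\in X_j\cap\mathbb{P}(W)$, so $z_j(W)\ge1$ and $\sum_{j\ne i}z_j(W)\ge|B_1|=(r-1)t_\ell(q)$. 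If $i\in B_1$, the symmetric argument with $W_{b_2}$ and $B_2$ applies. This gives $\lambda_i\ge(r-1)t_\ell(q)$ for every $i$, which completes the argument.

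The only delicate step is making one repair subspace serve every node while also handling repair I/O. The two-block trick handles the first issue, which is exactly why the hypothesis $(q-1)/(r-1)\ge2$ and the lower length bound $2(r-1)t_\ell(q)$ are needed. The second issue is handled by putting the single intersection point $p_c$ into each column set $X_c$.
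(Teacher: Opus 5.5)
Your proposal is correct and follows essentially the same argument as the paper: two disjoint blocks $C_{b_1},C_{b_2}$, a node set containing both, repair by $W_{b_2}$ for nodes in the first block and by $W_{b_1}$ otherwise, and column sets chosen to contain the unique intersection points. The only difference is that you deduce the bandwidth equality from $\lambda_i\le\alpha_i\le(r-1)t_\ell(q)$, whereas the paper first computes $\alpha_i$ separately; this is a harmless streamlining.
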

        
        \begin{proof}
        By Lemma~\ref{lem:block-partition}, we may choose two distinct blocks
        \[
        C_1,\ C_2\subseteq\F_{q^\ell}^\times.
        \]
        Since each block has size
        \[
        (r-1)t_\ell(q),
        \]
        the assumption
        \[
        2(r-1)t_\ell(q)\le n\le q^\ell+1
        \]
        allows us to choose a parameter set
        \[
        \Omega\subseteq \F_{q^\ell}\cup\{\infty\},
        \qquad
        |\Omega|=n,
        \]
        such that
        \[
        C_1\cup C_2\subseteq\Omega.
        \]
        Enumerate
        \[
        \Omega=\{c_1,\dots,c_n\},
        \]
        and set
        \[
        \mathcal{H}_i:=\mathcal{H}_{c_i}
        \qquad (i\in[n]).
        \]
        By Lemma~\ref{lem:nrc-mds}, these subspaces form an $(n,n-r,\ell)$ MDS array-code
        skeleton.
        
        Choose $b_1,b_2\in\F_{q^\ell}^\times$ such that
        \[
        C_1=C_{b_1},
        \qquad
        C_2=C_{b_2},
        \]
        and set
        \[
        W^{(1)}:=W_{b_1},
        \qquad
        W^{(2)}:=W_{b_2}.
        \]
        For each failed node $i\in[n]$, define
        \[
        \widetilde W_i:=
        \begin{cases}
        W^{(2)}, & \text{if } c_i\in C_1,\\[1mm]
        W^{(1)}, & \text{otherwise.}
        \end{cases}
        \]
        Since $C_1\cap C_2=\varnothing$, and since Lemma~\ref{lem:nrc-intersection}(2) shows that
        both $W^{(1)}$ and $W^{(2)}$ avoid $\mathcal{H}_0$ and $\mathcal{H}_\infty$, we have in
        every case
        \[
        \widetilde W_i\cap \mathcal{H}_i=\{0\}.
        \]
        Thus
        \[
        \widetilde W_i\in\mathcal{W}_i
        \qquad\text{for all }i\in[n].
        \]
        
        By construction, $\widetilde W_i$ hits exactly one full block of helper-node subspaces,
        and by Lemma~\ref{lem:nrc-intersection}(1) every nonzero intersection has dimension $1$.
        Therefore
        \[
        \sum_{j\ne i}\dim(\widetilde W_i\cap \mathcal{H}_j)
        =
        (r-1)t_\ell(q)
        \qquad\text{for all }i\in[n].
        \]
        Hence
        \[
        \alpha_i\ge (r-1)t_\ell(q)
        \qquad\text{for all }i\in[n].
        \]
        On the other hand, the incidence-multiplicity bound gives
        \[
        \alpha_i\le (r-1)t_\ell(q).
        \]
        It follows that
        \[
        \alpha_i=(r-1)t_\ell(q),
        \]
        and hence
        \[
        \beta_i(\mathcal{C})=\ell(n-1)-(r-1)t_\ell(q)
        \qquad\text{for all }i\in[n].
        \]
        Consequently,
        \[
        \beta_{\avg}(\mathcal{C})=\beta_{\max}(\mathcal{C})
        =\ell(n-1)-(r-1)t_\ell(q).
        \]
        
        It remains to realize the same value for repair I/O. For each $i\in[n]$, choose a set
        \[
        X_i\subseteq \mathbb{P}(\mathcal{H}_i)
        \]
        of $\ell$ distinct points spanning $\mathcal{H}_i$ as follows:
        \begin{itemize}
        \item if $c_i\in C_1$, require $X_i$ to contain the unique point of
        \[
        \mathbb{P}(W^{(1)}\cap \mathcal{H}_i);
        \]
        \item if $c_i\in C_2$, require $X_i$ to contain the unique point of
        \[
        \mathbb{P}(W^{(2)}\cap \mathcal{H}_i);
        \]
        \item otherwise, choose any $\ell$ distinct spanning points of $\mathbb{P}(\mathcal{H}_i)$.
        \end{itemize}
        By the realization statement in Section~\ref{sec:prelim}, these projective column sets are
        realized by a parity-check realization of an $(n,n-r,\ell)$ MDS array code with node
        subspaces $\mathcal{H}_1,\dots,\mathcal{H}_n$.
        
        For each failed node $i$, the chosen repair subspace $\widetilde W_i$ captures exactly one
        projective column point from each helper node in the corresponding full hit block, and no
        projective column points from the remaining helper nodes. Therefore
        \[
        \lambda_i\ge (r-1)t_\ell(q)
        \qquad\text{for all }i\in[n].
        \]
        Since
        \[
        \gamma_i(\mathcal{C})=\ell(n-1)-\lambda_i,
        \]
        it follows that
        \[
        \gamma_i(\mathcal{C})\le \ell(n-1)-(r-1)t_\ell(q)
        \qquad\text{for all }i\in[n].
        \]
        Combining this with Theorem~\ref{thm:imbound}, we conclude that
        \[
        \gamma_i(\mathcal{C})=\ell(n-1)-(r-1)t_\ell(q)
        \qquad\text{for all }i\in[n].
        \]
        Hence
        \[
        \gamma_{\avg}(\mathcal{C})=\gamma_{\max}(\mathcal{C})
        =\ell(n-1)-(r-1)t_\ell(q).
        \]
        \end{proof}
        
        \begin{remark}
        Relative to the general MDS length bound \(n\le q^\ell+r-1\), the construction above
        covers
        \[
        q^\ell+2-2(r-1)t_\ell(q)
        \]
        integer lengths within this upper-bound range. Thus the proportion of covered lengths is
        \[
        \frac{q^\ell+2-2(r-1)t_\ell(q)}{q^\ell+r-1}.
        \]
        Since
        \[
        t_\ell(q)=\frac{q^\ell-1}{q-1}\approx \frac{q^\ell}{q-1},
        \]
        this proportion is approximately
        \[
        1-\frac{2(r-1)}{q-1}.
        \]
        \end{remark}

        \section{Discussion and Open Problems}

        The incidence-multiplicity bound identifies a sharper obstruction for linear exact repair,
        and Section~\ref{sec:construction} shows that this bound is attained on a broad explicit
        family. Several natural questions remain open.
        
        \begin{enumerate}[label=(\arabic*)]
        
        \item \emph{The attainable length range.}
        The construction of Theorem~\ref{thm:nrc-attainability} attains the
        incidence-multiplicity bound for
        \[
        2(r-1)t_\ell(q)\le n\le q^\ell+1.
        \]
        It is natural to ask whether this interval can be extended. In particular, in the special
        case $r=\ell=2$, results of \cite{liu2026linear} show that attainability can persist
        beyond the range provided by our construction. This suggests the problem of determining
        the largest length range on which the incidence-multiplicity bound can be attained, and in
        particular how far below
        \[
        2(r-1)t_\ell(q)
        \]
        one can go.
        
        \item \emph{Bandwidth versus I/O attainability.}
        Our construction attains the incidence-multiplicity bound simultaneously for repair
        bandwidth and repair I/O. It is natural to ask whether these two notions of sharpness can
        separate. More precisely, does there exist a code length for which the incidence-multiplicity
        bound is attained for repair bandwidth, but not for repair I/O?
        
        \item \emph{The short-length regime.}
        Remark~\ref{rem:multiplicity-equality} shows that equality in the incidence-multiplicity bound can occur
        only if
        \[
        n\ge 1+(r-1)\frac{q^\ell-1}{q-1}.
        \]
        This leaves open the regime
        \[
        n<1+(r-1)\frac{q^\ell-1}{q-1},
        \]
        where the incidence-multiplicity bound cannot be sharp. In that range, it is natural to
        ask what other mechanisms govern the optimal lower bounds for repair bandwidth and repair
        I/O.
        
        \item \emph{The nondivisibility regime.}
        The construction in Section~\ref{sec:construction} requires
        \[
        r-1\mid(q-1).
        \]
        It is therefore natural to ask whether the incidence-multiplicity bound can still be
        attained when
        \[
        r-1\nmid(q-1).
        \]
        \end{enumerate}

\phantomsection
\addcontentsline{toc}{section}{\refname}
\bibliographystyle{alphaurl}
\bibliography{references}


\end{document}